\providecommand{\bra}[1]{\langle#1|}
\providecommand{\ket}[1]{|#1\rangle}
\newcommand{\proj}[1]{|#1\rangle\!\langle#1|}
\newcommand{\one}{\mathbbm{1}}
\newcommand{\tr}{{\rm tr}}
\newcommand{\veps}{\varepsilon}
\newcommand{\norm}[1]{\left\| #1\,\right\|_2}
\newcommand{\F}{\mathcal{F}}
\newcommand{\W}{\mathcal{W}}
\newcommand{\B}{\mathcal{B}}
\renewcommand{\S}{\mathcal{S}}
\definecolor{nblue}{rgb}{0.2,0.2,0.7}
\definecolor{ngreen}{rgb}{0.2,0.6,0.2}
\definecolor{nred}{rgb}{0.8,0.2,0.2}
\definecolor{nblack}{rgb}{0,0,0}
\newcommand{\blk}{\color{nblack}}
\newtheorem{theorem}{Theorem}
\newtheorem{definition}[theorem]{Definition}
\newtheorem{proposition}[theorem]{Proposition}
\begin{document}

\title{Imperfect measurements settings: \\ implications on quantum state tomography and entanglement witnesses} 
  
\author{Denis Rosset, Raphael Ferretti-Sch\"obitz, Jean-Daniel Bancal, Nicolas Gisin, Yeong-Cherng Liang}
\affiliation{Group of Applied Physics, University of Geneva, CH-1211 Geneva 4, Switzerland}

\date{\today}

\begin{abstract}
Reliable and well-characterized quantum resources are indispensable ingredients in quantum information processing. Typically, in a realistic characterization of these resources, apparatuses come with intrinsic uncertainties that can manifest themselves in the form of systematic errors. While systematic errors are generally accounted for through careful calibration, the effect of remaining imperfections on the characterization of quantum resources has been largely overlooked in the literature. In this paper, we investigate the effect of systematic errors that arise from imperfect alignment of measurement bases --- an error that can conceivably take place due to the limited controlability of measurement devices. We show that characterization of quantum resources using quantum state tomography or entanglement witnesses can be undermined with an amount of such imprecision that is not uncommon in laboratories. Curiously, for quantum state tomography, we find that having entanglement can help to reduce the susceptibility to this kind of error. We also briefly discuss how a given entanglement witness can be modified to incorporate the effect of such errors.
\end{abstract}

\maketitle

\section{Introduction}

The advent of quantum information science has brought inspiring opportunities for information processing~\cite{Book:MikeIke}.  At the heart of all these information processing protocols is the encoding of specific information in quantum systems and the ability to perform some specific measurements --- these features are notably important, e.g., in measurement based quantum computation~\cite{Briegel:2009}. Evidently, real life preparation of specific quantum states is never ideal and thus their reliable characterization is crucial for the implementation of these protocols.

For a complete characterization of quantum state, one uses the technique of quantum state tomography (see, e.g. Refs.~\cite{Tomography0,Thew:2002}) whereas for the purpose of entanglement verification, the technique of measuring entanglement witnesses~\cite{Terhal:2000,Lewenstein:2000,H4,Guhne:2009:PR} is widely employed. A common feature of these techniques is that they require measurements to be carried in a number of different settings. While these settings can be theoretically established easily, their experimental implementation may differ from the theoretical prescription, or come with intrinsic uncertainty, thus contributing to non-negligible systematic error. For instance, the measurement on a polarization qubit  cannot be more precise than the intrinsic uncertainty of the polarization rotator used (typically of the order of $1^\circ\sim2^\circ$ in real space\footnote{This translates to an uncertainty of $2^\circ\sim4^\circ$ on the Poincar\'e or Bloch sphere.}). Likewise, the precision of measurements on two-level atoms is limited by the effective phase and intensity uncertainties of the laser pulse experienced by the atom.

In this regard, it came as a surprise that the intrinsic uncertainty or systematic error present in measurement devices is --- to our knowledge --- hardly reported in experimental findings, clearly in stark contrast with statistical error~\cite{Christandl:1108.5329}. Also, the implication of imperfect devices seems hardly investigated beyond a relatively small number of research articles~\cite{Tomography0,Moroder:2010,Teo:1202.1713}. Of course, with careful calibration, systematic errors can usually be detected and reduced, see, e.g., Refs.~\cite{Cummins,Reichardt,Altepeter:2005,Smith:2012,Shulman:2012}.
However, it is important to note that even after careful calibration, measurement devices are after all never perfect and the intrinsic uncertainties can still manifest themselves in the form of bounded systematic errors.

The main purpose of this paper is to present concrete evidence showing that the potential implications of overlooking systematic errors can be significant in the characterization of quantum states. We illustrate this by considering a specific kind of systematic error that arises from misaligned measurement bases, and illustrating its effect on two commonly employed methods for characterizing quantum states, namely, quantum state tomography and the evaluation of entanglement witnesses~\cite{Guhne:2009:PR}. Our analysis therefore complements the approach of Ref.~\cite{Moroder:2012}, which allows the detection of systematic error from experimental data.

The paper is structured as follows. We begin in Sec.~\ref{Sec:Misalignment} by explaining the misalignment systematic error that arises from imperfect measurement settings. The notations that we are going to use in the text will be introduced in the same section. In Sec.~\ref{Sec:Tomography}, we present the effect of such systematic errors on quantum state tomography, in particular the fidelity of the reconstructed state with respect to the actual state prepared. Next, in Sec.~\ref{Sec:Witness}, we illustrate the effect of misalignment error on the evaluation of a family of genuine multipartite entanglement witnesses. In both Sec.~\ref{Sec:Tomography} and Sec.~\ref{Sec:Witness}, we also discuss how these effects can be compensated when the amount of systematic error (uncertainty) is known. We conclude in Sec.~\ref{Sec:Conclusion} with a summary of main results and some possibilities for future research. Technical details related to the main results can be found in the Appendices. 

\section{Misaligned bases from \\imperfect measurements} 
\label{Sec:Misalignment}

To study the effect of misalignment systematic errors on the characterization of quantum states, we consider the typical scenario where $n$ spatially separated qubits can each be measured locally in a number of different bases (settings). For the benefit of subsequent discussion, we remind that any of these local measurements can be described in terms of a 3-dimensional unit vector on the Bloch sphere. 
More explicitly, we shall denote by 
\begin{equation}\label{Eq:QubitObservabel}
	M_k^{(j)}=\hat{m}_k^{(j)}\cdot \vec{\sigma}
\end{equation}
the $k$-th qubit observable\footnote{Here, $\vec{\sigma}=(\sigma_x,\sigma_y,\sigma_z)$ denotes the vector of Pauli matrices.}  to be measured on the $j$-th qubit and $\hat{m}_k^{(j)}$  the corresponding Bloch vector. A misalignment  error can then be defined as follows.
\begin{definition}
A misalignment error is a systematic error that arises from imperfect measurement settings, i.e., the {\em actual} observable measured  $N_k^{(j)}=\hat{n}_k^{(j)}\cdot \vec{\sigma}$ differs from the {\em intended} measurement setting $M_k^{(j)}$ for at least some value of $j$ and $k$.
\end{definition}

Henceforth, we shall quantify the amount of error present by the quantity 
\begin{equation}\label{Eq:eps}
	\veps=\max_{j,k}{\rm acos}\left(\hat{m}_k^{(j)}\cdot\hat{n}_k^{(j)}\right).
\end{equation}
Geometrically, this means that among all the measurement settings chosen by all the $n$ parties, the {\em maximal angular deviation}\footnote{As measured in the Bloch sphere.} of the {\em actual} measurement directions from the {\em intended} ones are at most $\veps$.

To simplify the subsequent discussion, we shall also assume that all outcome probabilities can be estimated with negligible statistical error and that the actual measurement bases can always be described by $\hat{n}_k^{(j)}$ in all runs of the experiments. Next, we look into the effect of such misalignment error on some commonly employed protocols used in the characterization of a quantum state (see also Figure~\ref{Fig:Process}).

\begin{figure}[h!]
  \includegraphics[scale=1]{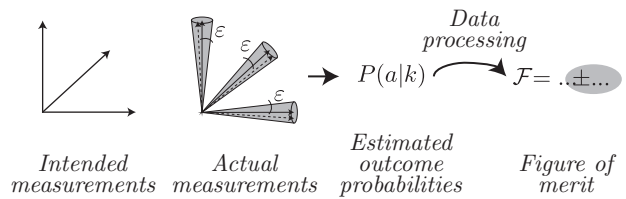}
  \caption{
    \label{Fig:Process}
A schematic representation of the procedures involved in the characterization of a quantum state and how misalignment error can affect the resulting characterization.
(1) The characterization protocol specifies measurement in a certain bases ({\em{intended} measurements}). (2) Due to imperfect measurement devices, misalignment errors creeps in during the experiments and the {\em{actual}} measurements performed differ from the intended ones. (3) The resulting measurement statistics  are used to compute a given figure of merit, such as the fidelity or the expectation value of an entanglement witness. (4) Imprecision of the measurement $\veps$  then translates into additional uncertainty in the final figure of merit.
  }
\end{figure}

\section{Implications on  quantum state tomography}
\label{Sec:Tomography}

Quantum state tomography is the process in which many copies of a quantum state are measured in a set of tomographically complete bases, followed by some state reconstruction algorithm using the measurement statistics and the {\em presumed} knowledge of the measurement bases~\cite{Tomography0,Thew:2002,Teo:1202.1713}. In this section, we illustrate the effect of misalignment systematic error on the tomography of $n$-partite qubit states.

Throughout, we shall assume that the qubit tomography is {\em intended} to be carried out in the standard Pauli bases, i.e.,
\begin{equation}\label{Eq:StandardBasis}
	M_1^{(j)}=\sigma_x, \quad M_2^{(j)}=\sigma_y,\quad M_3^{(j)}=\sigma_z,
\end{equation}
for all parties. Moreover, we shall quantify the effect of misalignment errors on quantum state tomography using the Uhlmann-Jozsa fidelity~\cite{Uhlmann:1976transition,Jozsa:1994fidelity} between the actual state $\tau$ and the reconstructed state $\rho$, i.e., 
\begin{equation}\label{Eq:fid}
    \F(\tau,\rho)= \left(\tr\sqrt{\sqrt{\tau}\rho \sqrt{\tau}}\right)^2.
\end{equation}
It is worth noting that when either $\rho$ or $\tau$ is a pure state, the expression above reduces to $\F(\tau,\rho)=\tr (\rho \, \tau)$. 

Clearly, the smaller is the value of $\mathcal{F}(\tau,\rho)$, the more drastic is the effect of misalignment error on quantum state tomography.

\subsection{Single qubit state tomography}
\label{Sec:Tomography:SingleQubit}

Let us begin with the simplest example of a single-qubit state tomography. The pedagogical example given below will also serve to remind the key features involved in some of the standard state reconstruction techniques, such as {\em linear inversion} and {\em maximum-likelihood estimation}.

\subsubsection{A simple example of erroneous state reconstruction starting from a pure state}
\label{Sec:Tomography:SingleQubit:Example}

 Consider a source that produces a quantum state $\tau$ as parametrized by the Bloch vector $\vec{t}$:
\begin{equation}
\label{Eq:Tomography:ActualState}
	\tau = \frac{\one + \vec{t} \cdot \vec{\sigma}}{2}.
\end{equation}
Suppose now that a qubit state tomography is to be carried out for this source via the {\em intended measurements}\footnote{For simplicity, we omit all superscripts in the single-qubit scenario.}
\begin{equation}\label{Eq:Single:StandardBasis}
	M_1=\sigma_x, \quad M_2=\sigma_y,\quad M_3=\sigma_z,
\end{equation}
whereas in reality, due to the presence of misalignment errors, the actual observables measured are described instead by $\{N_k\}$. 

From Born's rule, we can compute the outcome probability for the $k$-th measurement setting as:
\begin{equation}
\label{Eq:Tomography:Outcomes}
	P(\pm1|k)=\tr\Bigg(\tau\,\frac{\one\pm\,N_k}{2}\Bigg), \quad \text{for } k=1,2,3.
\end{equation}
The essence of state reconstruction is to find a legitimate density matrix $\rho$, referred as the {\em reconstructed state} such that
\begin{equation}\label{Eq:Tomography:Reconstruction}
	P(\pm1|k)=\tr\Bigg(\rho\,\frac{\one\pm\,M_k}{2}\Bigg).
\end{equation}
Since this amounts to solving a set of equations that are linear in the measurement statistics, this procedure of solving for the reconstructed state $\rho$ is also known as {\em linear inversion}.
Note that the reconstruction is done using the ideal description given in Eq.~\eqref{Eq:Single:StandardBasis}. Evidently, if the actual misalignment error was detected, the reconstruction procedure could be corrected by replacing $\{ M_k \}$ with $\{N_k \}$ in Eq.~\eqref{Eq:Tomography:Reconstruction}. 

Specifically, imagine that the actual state prepared is $\tau = \proj{\psi_s^+}$, where $\ket{\psi_s^+}$ is the positive eigenstate of $\hat{s} \cdot \vec{\sigma}$ with $\hat{s} = \tfrac{1}{\sqrt{3}}(1,1,1)^T = \vec{t}$:
\begin{equation}
\label{Eq:psi_s_pm}
	\ket{\psi_s^\pm}= \frac{1}{\sqrt{3\mp\sqrt{3}}} 
	\left [ \ket{0} \pm \sqrt{2\mp\sqrt{3}}e^{i \frac{\pi}{4}} \ket{1}\right], 
\end{equation}
 and the actual tomography measurement directions diverge uniformly from the intended directions as $\veps$ increases (see Figure~\ref{Fig:Tomography:Axes}), i.e.,
\begin{equation}
  \label{Eq:Tomography:ExampleMisalignement}
  \hat{n}_1 = \left(
    \begin{array}{r}
      c_\varepsilon \\
      -\tfrac{s_\varepsilon}{\sqrt{2}} \\
      -\tfrac{s_\varepsilon}{\sqrt{2}}
    \end{array}
  \right),
  \hat{n}_2 = \left(
    \begin{array}{r}
      -\tfrac{s_\varepsilon}{\sqrt{2}}\\
      c_\varepsilon \\
      -\tfrac{s_\varepsilon}{\sqrt{2}}
    \end{array}
  \right),
  \hat{n}_3 = \left(
    \begin{array}{r}
      -\tfrac{s_\varepsilon}{\sqrt{2}} \\
      -\tfrac{s_\varepsilon}{\sqrt{2}} \\
      c_\varepsilon 
    \end{array}
  \right),
\end{equation}
where
\begin{equation}\label{Eq:cs:eps} 
	c_\varepsilon = \cos \varepsilon, \quad s_\varepsilon = \sin \varepsilon.
\end{equation}

\begin{figure}
  \includegraphics[scale=.8]{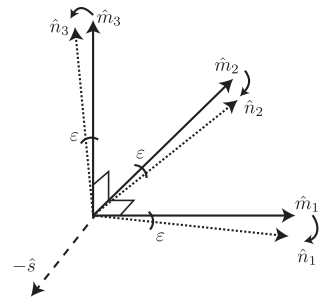}
  \caption{
    \label{Fig:Tomography:Axes}
    Intended and actual measurement directions for
    the tomography of $\ket{\psi_s}$. The actual measurement directions $\{\hat{n}_k\}$ given in Eq.~\eqref{Eq:Tomography:ExampleMisalignement} correspond to the triad of $\{\hat{m}_k\}$ ``opened" uniformly towards $-\hat{s}=-\tfrac{1}{\sqrt{3}}(1,1,1)$ while satisfying  ${\rm acos}(\hat{m}_k\cdot \hat{n}_k)= \veps$.}
\end{figure}

It now follows from Eq.~\eqref{Eq:Tomography:Outcomes}--Eq.~\eqref{Eq:Tomography:ExampleMisalignement} that
\begin{equation}
  \vec{r} = \left(\begin{array}{ccc}
      c_\varepsilon & - \tfrac{s_\varepsilon}{\sqrt{2}} & - \tfrac{s_\varepsilon}{\sqrt{2}}\\
      - \tfrac{s_\varepsilon}{\sqrt{2}} & c_\varepsilon & - \tfrac{s_\varepsilon}{\sqrt{2}}\\
      - \tfrac{s_\varepsilon}{\sqrt{2}} & - \tfrac{s_\varepsilon}{\sqrt{2}} & c_\varepsilon
    \end{array}\right) \hat{s}=(c_\varepsilon - \sqrt{2}s_\veps)\hat{s}.
  \label{Eq:Tomography:LinearEffect}
\end{equation}
The Bloch vector $\vec{t}=\hat{s}$ of $\tau = \proj{\psi_s^+}$ is thus an
eigenvector of the above linear transformation, with eigenvalue $c_\varepsilon - \sqrt{2}
s_\varepsilon$. Hence, as long as $|c_\varepsilon - \sqrt{2}s_\varepsilon|\le 1$, which takes place for 
$0\le\veps\le\text{acos}\frac{1}{3}\approx 70^\circ$, the reconstructed state obtained by solving Eq.~\eqref{Eq:Tomography:Reconstruction} is always a legitimate quantum state. It is then straightforward to verify that $\rho$ can be written as a convex mixture:
\begin{equation}
  \rho = f \proj{\psi_s^+} + \left( 1 -f \right) \proj{\psi_s^-},
\end{equation}
where
\begin{equation}
  \label{Eq:Tomography:FidelityLoss}
  f(\veps)=\F(\tau,\rho)=\frac{1}{2}\left (1+\cos \veps -\sqrt{2} \sin \veps \right )
\end{equation}
is simply the fidelity of the reconstructed state $\rho$ with respect to the actual state $\tau= \proj{\psi_s^+}$ [cf. Eq.~\eqref{Eq:fid}].

This implies, for instance, that  with a $2^\circ$ misalignment in $\hat{m}_k$ but everything else perfect, the erroneously reconstructed state still only has 97.5\% fidelity with respect to the actual state (Figure~\ref{Fig:Tomography:Fidelity}). In this regard, note that Eq.~\eqref{Eq:Tomography:FidelityLoss} actually represents the worst-case fidelity for any actual state $\tau$ that is pure and where the intended tomographic measurements are given by Eq.~\eqref{Eq:Single:StandardBasis}. The proof of this is somewhat involved and is relegated to Appendix~\ref{App:Mixed1Qubit}. More generally, to study the effect on fidelity for small $\veps$, we shall introduce the notion of {\em susceptibility} to misalignment errors, defined as:
\begin{equation}
  \label{Eq:Tomography:Susceptibility}
  \mathcal{S} = \frac{\partial f(\veps)}{\partial \veps} \bigg |_{\veps\rightarrow0}=-\frac{1}{\sqrt{2}}.
\end{equation}
In particular $\mathcal{S}\neq0$ shows that misalignment errors have a first-order effect on the fidelity. 

\begin{figure}
  \resizebox{7cm}{5cm}{\includegraphics{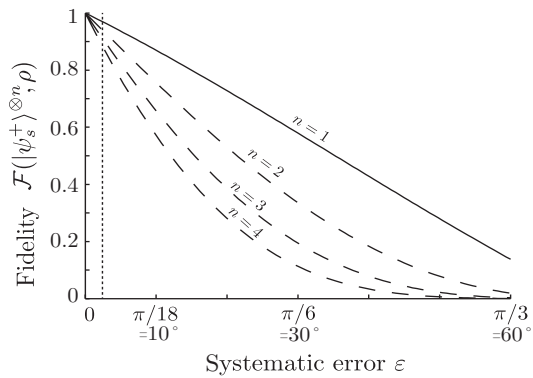}}
  \caption{
    \label{Fig:Tomography:Fidelity}
    Fidelity of the reconstructed state with respect to the initial state $\ket{\psi_s^+}^{\otimes n}$ when the actual measurement settings are those described in Figure~\ref{Fig:Tomography:Axes} and the intended measurement bases are those given in Eq.~\eqref{Eq:StandardBasis}. The vertical dotted line corresponds to $\veps=2^\circ$ as 
discussed in the text. }
\end{figure}

\subsubsection{Effect on state reconstruction starting from a general qubit state}
\label{Sec:Tomography:SingleQubit:WorstLoss}

Obviously, in a realistic experimental situation, we do not expect any source to produce a pure qubit state. A relevant problem to determine is thus whether the expression given in Eq.~\eqref{Eq:Tomography:FidelityLoss} still represents the worst-case fidelity even in this more general scenario. 

Before answering the above question, it is important to understand that the linear inversion process described in Sec.~\ref{Sec:Tomography:SingleQubit:Example} does not always lead to a physical state $\rho$. For instance,  in the example given above, if $\veps>\text{acos}\tfrac{1}{3}$, the reconstructed Bloch vector as given by Eq.~\eqref{Eq:Tomography:LinearEffect} would have length greater than 1, and thus corresponds to an unphysical state $\rho$.

A commonly employed technique to circumvent this kind of problem is to make use of the maximum-likelihood estimation (MLE) technique introduced by Hradil~\cite{Hradil:1997}. From the measurement statistics and the supposed knowledge of the measurement bases, this technique seeks to find a physical quantum state that maximizes the (log) likelihood function, and hence determines the quantum state that is most likely to give rise to the experimental data.

For a general mixed qubit state and a set of 3 measurement directions $\{\hat{n}_k\}$ satisfying Eq.~\eqref{Eq:eps}, one does not always obtain a physical state via linear inversion. Nonetheless, we prove in Appendix~\ref{App:Mixed1Qubit} that with MLE, the fidelity of the reconstructed state cannot be worse than that given in Eq.~\eqref{Eq:Tomography:FidelityLoss}, i.e., 
\begin{equation}\label{Eq:Tomography:SingleQubit:Bound}
	\mathcal{F}(\tau',\rho') \ge f(\veps),
\end{equation}
for any qubit state $\tau'$ and the corresponding state $\rho'$ reconstructed from the MLE algorithm. 

\subsection{Multiqubit state tomography}

Let us now study the effect of misalignment error on the tomography of  multiqubit states. As we will see below, in the reconstruction of multiqubit states, entanglement also plays a nontrivial role in combating the effect of misalignment errors.

\subsubsection{Multiqubit product states}

To start off, note that the simple one-qubit example given above can be easily generalized to the $n$-partite scenario if the source actually produces a product state, i.e., 
\begin{equation}
	\tau=\bigotimes_{j=1}^n \tau^{(j)}. 
\end{equation}
To see this, we note that the product nature of quantum states is preserved by the MLE state reconstruction technique (for a proof of this, see, e.g., Appendix~\ref{App:Proof}).  Moreover, the reconstruction procedure can be carried out independently for each qubit. Thus, if we define analogously $f_n(\veps)$ the worse-case fidelity in the $n$-partite case, it follows that
\begin{equation}
  \label{Eq:Tomography:NQubits:Fidelity}
  \F\left ( \bigotimes_{j=1}^n \tau^{(j)},\rho \right ) \ge
  f_n(\varepsilon) = \left [ f(\varepsilon) \right ]^n,
\end{equation}
with susceptibility 
\begin{equation}
	\S = \frac{\partial f_n}{\partial \veps}|_{\veps\rightarrow 0}=-\frac{n}{\sqrt{2}}.
\end{equation}
Note that  the inequality in Eq.~\eqref{Eq:Tomography:NQubits:Fidelity} is saturated by considering $\tau=\proj{\psi_s^+}^{\otimes n}$ and where all its constituents are measured with axes defined in Eq.~\eqref{Eq:Tomography:ExampleMisalignement} (see also Figure~\ref{Fig:Tomography:Axes}). Clearly, this shows that the effect of misalignment error may accumulate  with the number of parties. The fidelity itself $f_n(\veps)$ as a function of $\veps$ for $n\le 4$ is plotted in Figure~\ref{Fig:Tomography:Fidelity}.

\subsubsection{Two-qubit-entangled states}
\label{Sec:Tomography:TwoQubits:Pure}

Evidently, in the context of quantum information processing, it is arguably more relevant to look into the robustness of entangled states with respect to the aforementioned systematic errors. To this end, we have performed numerical optimization to determine --- for small $\veps$ and for fixed amount of entanglement (as parametrized by $\alpha\in\left[0,\frac{\pi}{4}\right]$) --- the {\em worst-case} fidelity $\F(\ket{\psi_\alpha},\rho)$ by varying over the misaligned measurement settings (as parameterized by $\hat{n}_k^{(j)})$ and arbitrary qubit basis states $\ket{\psi^\pm_j}$ in
\begin{equation}
	\ket{\psi_\alpha}=\cos\alpha \ket{\psi^+_1} \ket{\psi^+_2}+\sin\alpha\ket{\psi^-_1} \ket{\psi^-_2}.
\end{equation}

In our optimization,\footnote{To make the optimization more efficient and robust, we have also provided the gradient of the objection function (with respect to the parameters of the problem) to the optimization solver. } 
we assume Eq.~\eqref{Eq:StandardBasis} and focus on small error,  namely, $\veps\le\frac{\pi}{200}$ to determine the worst-case fidelity $\F(\ket{\psi_\alpha},\rho)$ numerically for $\veps$ in this domain. We then estimate numerically the susceptibility, i.e., the initial slope $\S\left( \alpha \right) =  \nabla_\veps \mathcal{F} (\ket{\psi_\alpha},\rho)|_{\veps=0}$. The results are shown in~Figure.~\ref{fig_twoqubit_tomography}. Interestingly, our results show that in the worst-case scenario, pure product states are the least robust against systematic errors that arise from misaligned measurements. In the two-qubit case ($n=2$), Eq.~\eqref{Eq:Tomography:NQubits:Fidelity} thus provides the worst-case rate of decrease of fidelity with respect to $\veps$ for small  $\veps$. It is also interesting to note that maximally entangled two-qubit pure state does not appear to be the most robust against this kind of error.
  
  \begin{figure}[h]
    \resizebox{7cm}{!}{\includegraphics{fig4_tomography_twoqubits.eps}}
    \caption{\label{fig_twoqubit_tomography}
      Susceptibility of 2-qubit pure state $\ket{\psi_\alpha}$ to misalignment 
      error $\veps$ as a function of the entanglement present in $\ket{\psi_\alpha}$
       (parameterized by the concurrence~\cite{W.K.Wootters:PRL:2245} of $\ket{\psi_\alpha}$). 
      For given $\alpha$, $\S(\alpha)$ gives the rate of decrease of the fidelity with respect to $\veps$ as $\veps\to0$; 
      $\S(0)$ is the corresponding intial slope for pure product state. The 257 numerical data points obtained from 
      $4\times10^4$ \blk optimizations are plotted in a solid line.  The dashed lines represent segments
      of the plot that can be very well approximated using the explicit
      parameterizations given in Appendix~\ref{App:TwoQubit}. }
  \end{figure}

What gives entangled state more resistance to this kind of systematic error in the worst-case scenario? Our intuition is that {\em uncorrelated}, {\em local} misalignment errors have mostly local effect. Here, the misalignment errors  are {\em uncorrelated} in the sense that for each intended measurement direction, say, for the second party $\hat{m}_{k_2}^{(2)}$, its actual, deviated measurement direction $\hat{n}_{k_2}^{(2)}$ is independent of the choice of measurement $k_1$ by the first party. To verify this intuition, we have performed similar analysis allowing the actual measurement direction to vary depending on the choice of measurement of the other party. Indeed, it turns out that pure product state is no longer the most fragile one against misalignment error in this more general scenario. More details on this analysis can be found in Appendix~\ref{App:CorrelatedError}.

Coming back to the uncorrelated case, we note that for small amount of entanglement, say, $\alpha\le\tfrac{3\pi}{32}$, the reconstructed state $\rho$ loses its fidelity with respect to the actual state $\ket{\psi_\alpha}$    --- as quantified by $1-\F(\ket{\psi_\alpha},\rho)$ --- predominantly via terms that are proportional to the length of the Bloch vector of the reduced density matrix. Since this length $\cos 2\alpha$  {\em shrinks} as $\alpha$ increases, clearly, among all the weakly entangled two-qubit pure states, the pure product state is  the most  susceptible to this kind of systematic error. A more formal analysis of this is given in Appendix~\ref{App:TwoQubit} .

\section{Implications on entanglement certification}
\label{Sec:Witness}

While standard quantum state tomography can be carried out for a system involving a small number of qubits, in the realms where quantum information processing is advantageous against its classical counterpart, this complete characterization is practically infeasible (see, however, Ref.~\cite{Tomography}). Next, we shall look at the implication of misaligned measurements on partial characterization of  quantum state via entanglement witness. 

\subsection{Preliminaries}

A witness $\W$ for genuine $n$-partite entanglement is a Hermitian observable that satisfies
\begin{equation}
  \label{Eq:EntanglementWitness:Definition}
	 \tr\,\left( \W\,\rho_\text{bisep.}\right)\ge 0,
\end{equation}
for all {\em biseparable states} $\rho_\text{(bi-)sep.}$ but is violated by at least some (genuinely $n$-partite) entangled states~\cite{H4,Guhne:2009:PR}. An {\em optimized} entanglement witness $\W$, moreover, satisfies the property that there must exist biseparable quantum state $\rho_{\rm bisep.}$ such that the defining inequality, cf. Eq.~\eqref{Eq:EntanglementWitness:Definition} is saturated. Geometrically, this means that the separating hyperplane defined by $\W$ is actually tangential to the set of biseparable states (see Figure~\ref{Fig:EntanglementWitness:Drawing}).

\begin{figure}[h!]
  \includegraphics{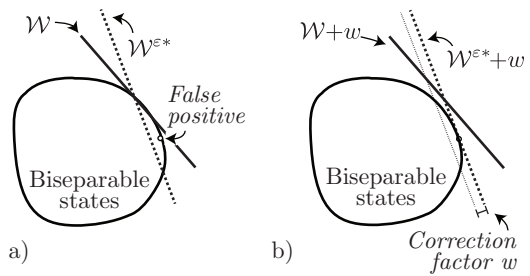}
  \caption{
    \label{Fig:EntanglementWitness:Drawing} 
    A schematic diagram showing the effect of misalignment error on the evaluation of an entanglement witness.
    An optimized entanglement witness $\mathcal{W}$ is tangent to the boundary of the set of biseparable states. (a) When evaluated using misaligned measurements, the witness can cross the boundary (as $\mathcal{W}^\veps$) and thus some biseparable states may appear to be genuinely $n$-partite entangled. (b) To correct the problem, one can evaluate the correction factor $w(\veps)$ such that no biseparable state give a false positive result.
  }
\end{figure}

Though being more economical in terms of resource requirements, we shall demonstrate below that entanglement certification via (optimized) entanglement witness is relatively  fragile against misalignment systematic errors (see also Figure~\ref{Fig:EntanglementWitness:Drawing}). For definiteness, we assume in subsequent analysis that a linear entanglement witness $\W$ is {\em intended} to be evaluated by measuring {\em local} observables $M^{(j)}_{k_j}=\hat{m}^{(j)}_{k_j}\cdot\vec{\sigma}$. And as with the rest of the paper, we assume that due to the presence of misalignment errors, the actual local observable measured is $N^{(j)}_{k_j} = \hat{n}^{(j)}_{k_j} \cdot \vec{\sigma}$, where the angular deviation of $\hat{n}^{(j)}_{k_j}$ from $\hat{m}^{(j)}_{k_j}$ is bounded by $\veps$, cf.  Eq.~\eqref{Eq:eps}.

To incorporate the effect of uncorrelated misalignment error, one can first determine the correction factor
\begin{equation}
  \label{Eq:EntanglementWitness:Correction}
  w(\veps)=\min_{\W^\veps}\min_{\rho_{\rm bisep.}} \tr\,\left( \W^\veps\,\rho_{\rm bisep.}\right),
\end{equation}
where the minimization of $\W^\veps$ is to be carried out over {\em all possible} Hermitian observables $\W^\veps$ satisfying the constraint given in Eq.~\eqref{Eq:eps}. We write $\W^{\veps*}$ the Hermitian observable giving the minimal value of $w(\veps)$.\footnote{Note that by convexity of the set of biseparable states, it suffices to consider pure biseparable quantum state $\ket{\Psi_{\rm bisep}}$ in the minimization of Eq.~\eqref{Eq:EntanglementWitness:Correction}.}

The function $w(\veps)$ thus gives the worst-case value of the witness $\W$ with respect to all biseparable states in the presence of bounded misalignment error $\veps$. If $\veps$ is known, the witness $\W$ can then be modified in the following way 
\begin{equation}
  \W\to \W'=\W-w(\veps)\one^{\otimes n}
\end{equation}
such that
\begin{equation}
  \label{Eq:EntanglementWitness:Definition1}
  \tr\,\left( \W'\,\rho_{\rm bisep.}\right)\ge 0,
\end{equation}
holds true for all biseparable states $\rho_{\rm bisep.}$ even if we allow misalignment error bounded by $\veps$, see Figure~\eqref{Fig:EntanglementWitness:Drawing}.

\subsection{A bipartite entanglement witness and its correction factor in the presence of bounded misalignment error}

Let us now look at some explicit examples. Consider the following two-qubit entanglement witness constructed from the singlet state $\ket{\Psi^-}$, 
\begin{equation}
  \label{Eq:EntanglementWitness:Singlet}
  \W_{\Psi^-} = \frac{1}{2}\one^{\otimes 2}-\proj{\Psi^-} =\frac{1}{4}\one^{\otimes 2}+\frac{1}{4}\sum_{k=x,y,z} \sigma_k\otimes \sigma_k,
\end{equation}
where $\one$ is the $2\times2$ identity matrix. Clearly, a natural way to evaluate this witness experimentally involves measurements in the Pauli bases, i.e., with $M_k^{(j)}$ given by Eq.~\eqref{Eq:StandardBasis}.

Imagine now  a physical system prepared in the separable state:
\begin{equation}
  \ket{\psi}=\cos^2\chi\left(\ket{0}+e^{i\frac{\pi}{4}}\,\tan\chi\ket{1}\right)\left(\tan\chi\ket{0}+ e^{-i\frac{3\pi}{4}}\ket{1}\right),
\end{equation}
where $\chi=\tfrac{{\rm asec}\sqrt{3}}{2}$ and, instead of the Pauli bases, measurements were made --- due to misaligned measurements --- along the following directions on the Bloch sphere
\begin{equation}
\label{Eq:Tomography:ExampleUnphysical}
\hat{n}_1^{(1,2)} = \left(
\begin{array}{r}
c_\varepsilon \\
\tfrac{s_\varepsilon}{\sqrt{2}} \\
\tfrac{s_\varepsilon}{\sqrt{2}}
\end{array}
\right),\quad
\hat{n}_2^{(1,2)} = \left(
\begin{array}{r}
\tfrac{s_\varepsilon}{\sqrt{2}}\\
c_\varepsilon \\
\tfrac{s_\varepsilon}{\sqrt{2}}
\end{array}
\right),\quad
\hat{n}_3^{(1,2)} = \left(
\begin{array}{r}
\tfrac{s_\varepsilon}{\sqrt{2}} \\
\tfrac{s_\varepsilon}{\sqrt{2}} \\
c_\varepsilon 
\end{array}
\right).
\end{equation}

An intended measurement on $\W_{\Psi^-}$ using $\ket{\psi}$ therefore results in the measurement of 
\begin{equation}
	\W_{\Psi^-}^\veps=\frac{1}{4}\one^{\otimes2}+ \frac{1}{4}\sum_{k=1}^3 \hat{n}_k^{(1)}\cdot\vec{\sigma}\otimes\hat{n}_k^{(2)}\cdot\vec{\sigma},
\end{equation} 
which gives an expectation value of
\begin{equation}\label{Eq:CorrectionSinglet}
  \bra{\psi}W_{\Psi^-}^\veps\ket{\psi} = \frac{1}{8}\left(\cos\,2\veps -2\sqrt{2}\sin\,2\veps -1\right),
\end{equation}
which is negative for all $0<\veps<\frac{\pi}{2}$.
In other words, as soon as $\veps>0$, an evaluation of the above entanglement witness $\W^\veps_{\Psi^-}$ using the separable state $\ket{\psi}$ will always give an affirmative, but erroneous certification that the state is entangled. Numerically, the above strategy also corresponds to the minimal value that we have found for the optimization specified in Eq.~\eqref{Eq:EntanglementWitness:Correction}. Therefore, for the witness $\W_{\Psi^-}$, our result  suggests that the correction factor $w(\veps)$ is given by Eq.~\eqref{Eq:CorrectionSinglet}.

\subsection{A witness for genuine $n$-qubit entanglement and its correction factor in the presence of bounded misalignment error}

Likewise, for the detection of genuine multipartite entanglement, let us consider the following $n$-partite entanglement witness~\cite{Bourennane:2004}:
\begin{equation}
  \W_\text{GHZ} = \frac{1}{2}\one^{\otimes n} - \proj{\text{GHZ}},
\end{equation}
where $\ket{\text{GHZ}}=\tfrac{1}{\sqrt{2}}(\ket{0}^{\otimes n}+\ket{1}^{\otimes n}$ is the $n$-partite Greenberger-Horne-Zeilinger state. An economical way to measure this witness is to have all the $n$ parties performing the same measurements~\cite{Guhne:2007wd}, i.e., 
\begin{equation}\label{Eq:Mkj}
	M_k^{(j)}=\cos\frac{k\pi}{n}\sigma_x+\sin\frac{k\pi}{n}\sigma_y, \quad\text{for } k=1,\ldots,n,
\end{equation}
and $M_{n+1}^{(j)}=\sigma_z$. The measurement statistics on these settings can then be combined to give the desired expectation value of $\W_\text{GHZ}$ in the following way:
\begin{equation}\label{Eq:WitnessGHZ}
  \W_{\ket{\text{GHZ}}}\!=\!\frac{1}{2}\bigg[\!\one^{\otimes n} -\!\! \sum_{\ell=\pm1}\!\!\!\bigg(\frac{\one+\ell\,\sigma_z}{2}\bigg)^{\!\!\!\otimes n}
  \!\!-\sum_{k=1}^n \frac{(-1)^k}{n} \bigotimes_{j=1}^n M_k^{(j)}\!
  \bigg].
\end{equation}
In what follows, we provide estimates of the correction factor $w_\text{GHZ}(\veps)$ of $\W_{\ket{\text{GHZ}}}$ --- as determined by numerical optimization --- for $n\ge 4$, separating the cases of $n$ even and $n$ odd. This correction factor $w_\text{GHZ}(\veps)$ is plotted for $n=3,\ldots,8$ in Figure~\ref{Fig:EntanglementWitness:Value}.
\begin{figure}
  \includegraphics[scale=1.1]{fig6_entanglementwitness_value.eps}
  \caption{
    \label{Fig:EntanglementWitness:Value}
    The worst expectation value of $\W^\veps$ found using bispearable states and systematic error $\veps\le10^\circ$. When $\veps=0$, $\W^\veps$ reduces to $\W_{\ket{\text{GHZ}}}$ given in Eq.~\eqref{Eq:WitnessGHZ}. The curves for $n\ge 4$ are computed using Eqs.~\eqref{Eq:CorFac:Even} and~\eqref{Eq:CorFac:Odd}, whereas the curve for $n=3$ has been obtained numerically.}
\end{figure}

\subsubsection{Estimated correction factor for even $n\ge 4$}

For $\W_{\ket{\text{GHZ}}}$ with even $n\ge4$ and $\veps \le \tfrac{\pi}{2n}$, numerical optimizations suggest that the correction factor is given by:
 \begin{equation}\label{Eq:CorFac:Even}
   w_\text{GHZ}^{n\text{ even}}(\veps)=-\frac{1}{4}\sin\,n\veps.
 \end{equation}
This can be achieved by considering the $n$-qubit biseparable state
\begin{equation}
 	\ket{\psi_n}=\frac{1}{2}(\ket{0}^{\otimes \frac{n}{2}}+e^{i\frac{\pi}{4}}\ket{1}^{\otimes \frac{n}{2}})\otimes(\ket{0}^{\otimes \frac{n}{2}}+e^{-i\frac{\pi}{4}}\ket{1}^{\otimes \frac{n}{2}}),
 \end{equation}
for even $n\ge4$, with the following misaligned observables:
 \begin{equation} 
 	N_k^{(j\le\frac{n}{2})}=\cos\left[\frac{k\pi}{n}+(-1)^k\veps\right]\sigma_x+\sin\left[\frac{k\pi}{n}+(-1)^k\veps\right]\sigma_y,
 \end{equation}
 and
 \begin{equation} 
 	N_k^{(j>\frac{n}{2})}=\cos\left[\frac{k\pi}{n}-(-1)^{k}\veps\right]\sigma_x+\sin\left[\frac{k\pi}{n}-(-1)^{k}\veps\right]\sigma_y
 \end{equation}
 (see Figure~\ref{Fig:EntanglementWitness:EvenAxes}). Clearly, Eq.~\eqref{Eq:CorFac:Even} is negative as soon as  $\veps>0$. Thus, as with the two-qubit example given above, measuring  the biseparable state $\ket{\psi_n}$ for these imperfectly implemented witnesses always results in an erroneous certification of the non-separability of the state.
 \begin{figure}
   \includegraphics[scale=0.8]{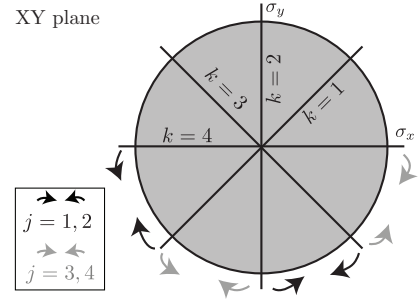}
   \caption{
     \label{Fig:EntanglementWitness:EvenAxes} 
     Evaluating the entanglement witness $\W_\text{GHZ}$ with the intended measurement settings $M_k^{(j)}$ on the XY plane (drawn here for $n=4$ parties). These settings correspond to equally spaced measurement directions $\hat{m}_k^{(j)}$ on the XY plane, with neighboring ones separated by an angle $\tfrac{k\pi}{n}$. The systematic errors considered here correspond to having these rays moving in pairs towards each other as $\veps$ increases. Measurement directions are deviated according to the black arrows for qubits numbered $j=1,2$ and to the grey arrows for $j=3,4$. Measurement in the $\sigma_z$ basis is assumed to stay unperturbed.}
 \end{figure}
 
 \label{App:WitnessNQubit}
 \subsubsection{Estimated correction factor for odd $n\ge 5$}

 For $\W_{\ket{\text{GHZ}}}$ with odd $n\ge5$ and $\veps \lesssim \tfrac{\pi}{2n}$, numerical optimizations suggest that the correction factor is given by:
 \begin{equation}
\label{Eq:CorFac:Odd}
   w_\text{GHZ}^{n\text{ odd}}(\veps)=\frac{1}{4n}\bigg [ n-2 -  (n-1)\cos\,\veps+\cos\,n\veps-\frac{\sin\,n\veps}{\tan \frac{\pi}{2n}} \bigg ].
 \end{equation}
 This can be achieved using the biseparable pure state $\ket{\psi_{n}}=\ket{\psi_{-}}\otimes\ket{\psi_{+}}$,
 \begin{gather}
   \ket{\psi_{-}}=\frac{1}{\sqrt{2}}\left(\ket{0}^{\otimes  n_-} 
     + e^{- \nu i \frac{\left( 3 n + \nu \right) \pi}{4 n}} \ket{1}^{\otimes n_-}\right),\\
   \ket{\psi_{+}}=\frac{1}{\sqrt{2}}\left(\ket{0}^{\otimes  n_+} 
     + e^{\nu i \frac{\left( 3 n + \nu \right) \pi}{4 n}}  \ket{1}^{\otimes n_+}\right),
 \end{gather}
where $n_{\pm}=\frac{n\pm1}{2}$ and $\nu = (-1)^{n_-}$. We choose the following local measurement settings for $k=1,\ldots, n$:
 \begin{equation}
   \begin{split}
     N_k^{(j)}=\cos\left[\frac{k\pi}{n}+ g_{k}^{(j)}\veps\right]\sigma_x
     +\sin\left[\frac{k\pi}{n}+g_{k}^{(j)}\veps\right]\sigma_y,
   \end{split}
 \end{equation}
keeping $N_{n+1}^{(j)}=M_{n+1}^{(j)}=\sigma_z$ and defining $g_{k}^{(j)}= \nu (-1)^k \text{sign } [(k-n_+)(j-\tfrac{n}{2})]$.

 \section{Conclusion}\label{Sec:Conclusion}
 
 Intrinsic uncertainties in measurement devices, which can manifest themselves in the form of  misalignment systematic errors represent an unavoidable part of any real-life quantum experiment. In this paper, we show by explicit examples that the procedure of characterizing quantum resources using state tomography or entanglement witnesses can be considerably affected when such systematic errors are not properly taken care of. For example, when considering pure two-qubit state tomography, every single degree of misalignment on the Bloch sphere can potentially lead to $\approx1\%$ decrease in the fidelity of the reconstructed state. For general product state, the worst loss of fidelity has a scaling that is linear in $n$, making them increasingly sensitive to such systematic error.
 
For the verification of entanglement via an optimized entanglement witness, we show that an erroneous certification could arise {\em whenever} there is nonzero misalignment error.  While our demonstration was made for specific entanglement witnesses, it should be emphasized that {\em all} non-device-independent entanglement witnesses~\cite{diew} are potentially susceptible to this kind of imperfection, and the procedure we followed can also be applied to them. But all is not lost, the effect of misalignment error, as we have demonstrated, can be incorporated by modifying a given entanglement witness. In this regard, it could be interesting to understand the amount of potential misalignment systematic error present in some of the state-of-the-art characterizations of quantum resources, such as those in Refs.~\cite{superconducting,ions,photons}.  Alternatively,  entanglement verification can also be carried out without such characterization by implementing device-independent entanglement witnesses~\cite{diew,diewTV,GUBI} provided by Bell-like inequalities. 

Let us now comment on some further possibilities for future research. Firstly, for quantum state tomography, our analysis of entangled two-qubit states focused on pure states; based on the numerical optimizations that we have done, we conjecture that the bound given in Eq. (18) holds for mixed two-qubit states as well. Obviously, similar studies for n-qubit systems should be carried out for n ≥ 3. The increased resistance that we have observed in entangled two-qubit states against misalignment error suggests that entanglement also plays a nontrivial role in quantum state estimation — something that deserves to be understood better. For instance, it would be interesting to see what role entanglement
plays when considering other imperfections, such as mismatched detector efficiencies. We remind also that our analysis on the effect of misaligned measurements was carried out at an abstract level where, in particular, each measurement basis can be misaligned differently but in an uncorrelated manner (see, however, Appendix A4). In practice, typical errors present in particular experimental setups may be more/less general than considered here, leading to larger/smaller effects. It would thus be interesting to adapt the analysis that we have presented here for some actual physical system (e.g. superconducting qubits, qubits in ion traps) and see how the effect changes.

Clearly, our work only marks the beginning of a deeper understanding how imperfect devices can affect real-life characterization of quantum resources. In the long run, it is clearly desirable to develop a general method for computing the additional uncertainty that should be incorporated in any figure of merit as a result of any given imprecision in the measurement device.  The joint effect of imperfect devices and finite statistical error is evidently also a relevant question that needs to be addressed.
 
Of course, it is also of general interest to understand how imperfect measurement settings directly affect quantum information processing tasks, which evidently require more than well-characterized quantum resource. To this end, we note that the effect of imperfect measurement settings on measure and prepare quantum key distribution protocol is investigated in parallel in Ref.~\cite{Woodhead:2012}. 
 
 \paragraph*{Acknowledgments.}
 % =================================
 We acknowledge useful discussions with Stefano Pironio and Clara Osorio. This work was supported by the Swiss NCCRs QP and QSIT, the CHIST-ERA DIQIP and the European ERC-AG QORE.

 \appendix
 
 %%%%% Tomography of mixed state
 
\section{Miscellaneous details related to quantum state tomography}
\label{App:Tomography}

%\subsection{Tomography of a single qubit pure state}
%\label{subsec_tomo_singlequbit}

 \subsection{Tomography of a single qubit mixed state}
 \label{App:Mixed1Qubit}
 
%\red
Under the assumption that the misalignment systematic error is upper-bounded by $\veps$, we show in this Appendix that the minimal fidelity of the reconstructed single-qubit state $\rho$ with respect to the actual state $\tau$, i.e., $\F(\tau,\rho)$ is indeed given by Eq.~\eqref{Eq:Tomography:FidelityLoss}.  Throughout this Appendix, we assume that $\veps \le {\rm acos} \sqrt{\tfrac{2}{3}} \approx 35^{\circ}$ and that the state is reconstructed via linear inversion whenever possible, or otherwise via the maximum-likelihood (MLE) estimation technique.
 
% We assume $\veps$ is reasonable: $\veps \le {\rm acos} \tfrac{1}{3} \approx 70^{\circ}$. 
% We first characterize the set $\mathcal{C}$ of outcome probability distributions $P(\pm|k)$ compatible with an actual state $\tau$ and systematic error $\veps$. We then characterize the set $\mathcal{R}$ of reconstructed states $\rho$ compatible with this data, and bound the maximal loss of fidelity.
% \blk
 
 \subsubsection{Outcome data compatible with $\tau$, $\veps$}
 \label{App:Mixed1Qubit:OutcomeData}
 
We start by noting that for the purpose of state reconstruction, instead of the outcome probability distribution computed in Eq.~\eqref{Eq:Tomography:Outcomes}, we can just as well work with the vector $\vec{c}\in\mathbb{R}^3$ defined as follows:\footnote{There is a one-to-one correspondence between the components of $\vec{c}$ and the measured probability distribution $P(\pm1|k)$.}
% Instead of using the  we define 
 \begin{equation} \label{Eq:App:Correlators}
   \vec{c}\quad\text{ such that }\quad c_k=P(+1|k)-P(-1|k)=\tr(\tau N_k).
 \end{equation}
In the absence of misalignment error, i.e., when $N_k=M_k$ for all $k$, $\vec{c}$ is simply the vector of average values with respect to the Pauli matrices.

We now characterize the set $\mathcal{C}$ of $\vec{c}$ obtainable from the actual state $\tau = (\one + \vec{t} \cdot \vec{\sigma})/2$ and bounded misalignment error $\veps$. Using Eqs.~\eqref{Eq:Tomography:ActualState} and \eqref{Eq:App:Correlators}, we see that $\vec{c}$ and $\vec{t}$ are related through a linear transformation:
 \begin{equation}
   \label{Eq:App:Phi}
   \vec{c} = \tr \left[\left( \hat{n}_i \cdot \vec{\sigma} \right) \tau\right] =
   \underset{\equiv \Phi}{\underbrace{\left(\begin{array}{ccc}
           \hat{n}_1^{\left( x \right)} & \hat{n}_1^{\left( y \right)} &
           \hat{n}_1^{\left( z \right)}\\
           \hat{n}_2^{\left( x \right)} & \hat{n}_2^{\left( y \right)} &
           \hat{n}_2^{\left( z \right)}\\
           \hat{n}_3^{\left( x \right)} & \hat{n}_3^{\left( y \right)} &
           \hat{n}_3^{\left( z \right)}
         \end{array}\right)} \vec{t}},
 \end{equation}
 where $\Phi$ is a real matrix that encodes the actual measurement directions, $\hat{n}_k^{(x)}$ is the $x$-component of the unit vector $\hat{n}_k$; $\hat{n}_k^{(y)}$ and $\hat{n}_k^{(z)}$ are analogously defined. To parametrize $\mathcal{C}(\tau, \veps)$, we recall from Eq.~\eqref{Eq:eps} that the misalignment errors are bounded such that $\hat{n}_k$ satisfies 
\begin{equation}\label{Ineq:eps}
	\cos \veps \le \hat{n}_k \cdot \hat{m}_k.
\end{equation}
Additionally, we observe from Eq.~\eqref{Eq:App:Phi} that each component of the vector $\vec{c}$ can be written as $c_k=\hat{n}_k\cdot\vec{t}$.   Thus for given $\vec{t}$, each $c_k$ is constrained with an interval. More precisely, the set $\mathcal{C}(\tau, \veps)$ is a box whose boundaries are  specified by vectors saturating inequality~\eqref{Ineq:eps}.

% Thus for given $\vec{t}$, each $c_k$ is individually constrained within an interval. It can then be shown that $c_k$ reaches the bounds of this interval when $\hat{n}_k$ saturates the constraint, i.e., $\hat{n}_k \cdot \hat{m}_k = \cos \veps$.  The set of possible output data $\mathcal{C}(\tau, \veps)$ is thus a convex set (more specifically, a box) whose boundary is obtained when all deviations from the intended measurements are maximal, saturating $\vec{n}_k \cdot \hat{m}_k = \cos \veps$ for $k=1,2$ and 3. \blk
 
 To simplify the computation, we now show that $\mathcal{C}(\tau,\veps)$ is contained inside a ball $\mathcal{B}(\vec{t},t \lambda)$ of radius $t \lambda$ centered at $\vec{t}$ (see Figure~\ref{fig_ball}), i.e., $\mathcal{B}(\vec{t},t \lambda) \supset \mathcal{C}(\tau,\veps)$, where
 \begin{equation}
   \label{Eq:App:Ball}
    \lambda \equiv 1 - \cos \veps + \sqrt{2} \sin \veps,
 \end{equation}
 and $\veps \lesssim 35^\circ$ ensures $0 \le \lambda \le 1$.
 \begin{figure}[h!]
   \includegraphics{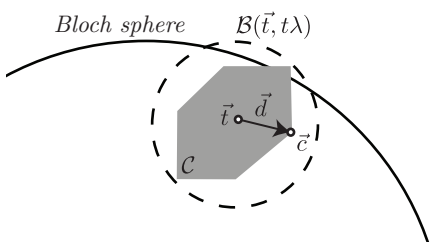}
   \caption{
     \label{fig_relaxation}
     \label{fig_ball}
     A schematic view of the set $\mathcal{C}$ of  vector $\vec{c}$ compatible with an actual state $\tau(\vec{t})$ measured
     with maximal misalignment error $\veps$; the set $\mathcal{C}$ is shaded
     in this 2D projection. In our proof, we work with its enclosing sphere
     $\mathcal{B}(\vec{t},t \lambda)$ whose boundary is marked with the dashed line.
%     Outcome data $\vec{c}\in\mathcal{C}$ compatible with an actual state $\tau(\vec{t})$ measured
%     with maximal systematic error $\veps$; the set $\mathcal{C}$ is shaded
%     in this 2D projection. In our proof, we work with its enclosing sphere
%     $\mathcal{B}(\vec{t},t \lambda)$ (dashed line).
   }
 \end{figure}

\begin{proof} 
To prove $\mathcal{C}(\tau,\veps)\subset \mathcal{B}(\vec{t},t \lambda)$, let us  take $\hat{n}_k$ that saturate inequality~\eqref{Ineq:eps} and decompose the matrix $\Phi$ in Eq.~\eqref{Eq:App:Phi} as
 \begin{equation}\label{Eq:Decomposition}
   \Phi - \mathbbm{1} = \left( \cos \varepsilon - 1 \right) \mathbbm{1}_3 +
   \sin \varepsilon \underset{\Psi}{\underbrace{\left(\begin{array}{ccc}
           0 & c_1 & s_1\\
           s_2 & 0 & c_2\\
           c_3 & s_3 & 0
         \end{array}\right)}},
 \end{equation}
 where $s_i = \sin \varphi_i$ and $c_i = \cos \varphi_i$. This allows us to determine the size of the enclosing ball $\B$ via the norm of the vector (see Figure~\ref{fig_ball}):
% We quantify via the difference $\vec{c} - \vec{t}$ the effect of systematic errors 
 \begin{equation}\label{Eq:Diff}
   \vec{d} = \vec{c} - \vec{t} = \left( \Phi - \mathbbm{1} \right) \vec{t} = t
   \left( \Phi - \mathbbm{1} \right) \, \hat{t} ,
   \quad t \equiv \norm{\vec{t}}.
 \end{equation}
 Since the maximal spectral radius of the matrix $\Psi$ in Eq.~\eqref{Eq:Decomposition}  is $\sqrt{2}$, the spectral radius of $\Phi - \mathbbm{1}$ is upper-bounded by $\lambda$ [as defined in Eq.~\eqref{Eq:App:Ball}]. Then the norm of the difference vector $\vec{d}$ is upper-bounded as follows:
 \begin{equation}
   d = \norm{\vec{d}} \le t \norm{\Phi - \mathbbm{1}}  \norm{\hat{t}} \le t\lambda
 \end{equation}
 which shows that a ball of radius $t \lambda$ centered at $\vec{t}$ indeed encloses all vectors $\vec{c}$ obtainable from $\tau$ assuming bounded misalignment error  $\veps$.
 \end{proof}
 
 \subsubsection{Reconstructed states compatible with $\vec{c}\in\mathcal{C}(\tau$, $\veps)$}

Whenever the vector $\vec{c}$ represents a legitimate Bloch vector, the reconstructed state $\rho$ follows immediately from Eq.~\eqref{Eq:Tomography:Reconstruction}:
 \begin{equation}
   \label{Eq:App:Reconstruction}
   \|\vec{c}\| \le 1 \quad \implies \quad \rho = \frac{\one + \vec{c} \cdot \vec{\sigma}}{2}.
 \end{equation}
This is true even if the state is reconstructed by the MLE technique. Whenever $\|\vec{c}\| > 1$, Eq.~\eqref{Eq:Tomography:Reconstruction} fails, but a physical state can still be reconstructed using the MLE technique by finding a quantum state $\rho$ with Bloch vector $\vec{r}$ that maximizes the likelihood function $\mathcal{L}(\vec{r})$,\footnote{This  can be achieved, for example, by using the iterative algorithm described in Refs.~\cite{Hradil:1997,MLE,Rehacek:2007}.} 
 where $\vec{r}$ is constrained by $\| \vec{r} \| \le 1$. 
 
By computing the Hessian of $\log \mathcal{L}(\vec{r})$, one can check that $\mathcal{L}$ is {\em strictly} concave in $\vec{r}$, with an (unconstrained) maximum at $\vec{r} = \vec{c}$. Therefore, for $||\vec{c}||>1$, the solution $\vec{r}\,^*$  that maximizes $\mathcal{L}$ must  lie on the boundary on the Bloch sphere. In particular the line segment joining $\vec{r}\,^*$ and $\vec{c}$ cannot cross the Bloch sphere; otherwise, it would contradict the strict concavity of $\mathcal{L}$.
 We can thus restrict our attention to $\vec{r}$ that lies on the surface of a spherical cap delimited by tangents of the Bloch sphere passing through $\vec{c}$. In Figure~\ref{Fig:App:MLE}, we plot the vector $\vec{r}$ that has the maximal angular deviation from $\vec{t}$ while being perpendicular to the tangential plane containing $\vec{c}$. Note, however, that depending on the actual functional form of $\mathcal{L}$, the state reconstructed from MLE may have an angular deviation that is less than $\theta$.
 \begin{figure}
   \includegraphics{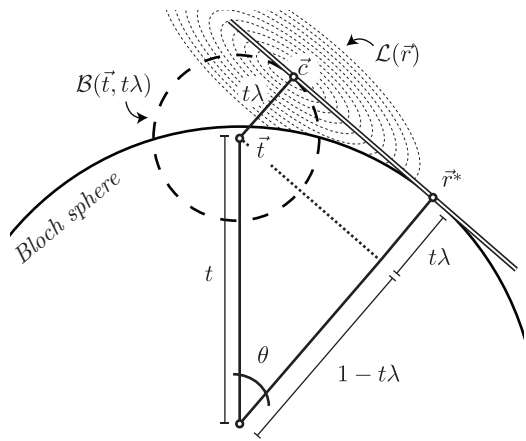}
   \caption{
     \label{Fig:App:MLE}
     Reconstructed state $\vec{r}\,^*$ using MLE, compatible with outcome data $\vec{c}$ outside the Bloch sphere. We show in the text that $\vec{c}$ is contained inside a ball of radius $t \lambda$ centered at $\vec{t}$ (dashed line). The reconstructed Bloch vector $\vec{r}\,^*$ lies on the surface of the Bloch sphere. Moreover,  since  the function $\mathcal{L}(\vec{r})$ is strictly concave --- schematic representation of the contour lines of $\mathcal{L}(\vec{r})$ are plotted using dots --- $\vec{r}\,^*$ is constrained on a spherical cap such that the line segment (drawn with a double edge) between $\vec{r}\,^*$ and $\vec{c}$ does not cross the boundary of the Bloch sphere. The maximal angle $\theta$ between vectors $\vec{t}$ and $\vec{r}\,^*$ is then obtained (see text) when the mentioned line segment is tangent to both balls.
   }
 \end{figure}
 
What is the maximal $\theta$ allowed? Since the outcome vector $\vec{c}$ is contained within $\mathcal{B}(\vec{t}, t \lambda)$, the angle $\theta$ is maximal when the line passing through $\vec{c}$ and $\vec{r}$ is tangent to both the Bloch sphere and $\mathcal{B}(\vec{t}, t \lambda)$ (as shown in Figure~\ref{Fig:App:MLE}). Standard trigonometry then gives:
 \begin{equation}
   \label{Eq:App:CosTheta}
   \cos \theta = \frac{1-t \lambda}{t} \implies \frac{\vec{r} \cdot \vec{t}}{t} \ge  \frac{1-t \lambda}{t}.
 \end{equation}

 \subsubsection{Worst-case fidelity}
 We now recall from Ref.~\cite{Mendonca:2008} that the fidelity function for single qubit states can be written as
 \begin{equation}
   \label{Eq:Fidelity}
   \F(\tau(\vec{t}),\rho(\vec{r}))=\frac{1}{2}\left[1+\vec{t}\cdot\vec{r}+\sqrt{(1-t^2)(1-r^2)}\right],
 \end{equation}
 which is concave for any given $\tau$, and has maximal value $1$ when $\rho = \tau$.
 
 To compute the worst-case fidelity $\F(\vec{t},\vec{r})$, we need to consider two separate cases. For $\| \vec{c} \| > 1$, the state $\rho$ is pure and using Eq.~\eqref{Eq:App:CosTheta}, we get:
 \begin{equation}
   \label{Eq:App:MLECase1}
   \mathcal{F} = \frac{1 + \vec{r} \cdot \vec{t}}{2}  \ge 1 - \frac{t \lambda}{2} \ge 1 - \frac{\lambda}{2}.
 \end{equation}
 
For $\| \vec{c} \| \le 1$, the reconstructed Bloch vector $\vec{r}=\vec{c}$ lies within the Bloch sphere. To compare the minimal fidelity attainable in this case with Eq.~\eqref{Eq:Tomography:FidelityLoss}, we shall consider the intersection of $\B(\vec{t}, \lambda t)$ and the Bloch sphere --- a convex set which we shall denote by $\B'$. Note that $\B'$ is still a superset of the set of outcome vector $\vec{c}$ compatible with $\tau$ and bounded misalignment error $\veps$, hence,
\begin{equation}\label{Eq:BoundFid}
	\min_{\| \vec{c} \| \le 1} \F(\vec{t},\vec{c})\ge \min_{\vec{c'}\in\B'} \F(\vec{t},\vec{c'})
\end{equation}
 By the concavity of $\mathcal{L}$, the right-hand-side of Eq.~\eqref{Eq:BoundFid} is attained at the boundary of $\B'$. Here, we can distinguish two subcases, namely,  the minimizing $\vec{c'}\in\B'$  corresponds to a (1) pure state or (2) mixed state, cf. Figure~\ref{Fig:App:MLE}.  
 
Now, we  remind that the fidelity function Eq.~\eqref{Eq:Fidelity} depends only on the inner product between the vectors as well as their magnitude. Without loss of generality, we can thus write these vectors in the 2-dimensional subspace spanned by $\vec{t}$ and $\vec{c'}$. For example, in the first case, we may write
\begin{equation}
	\vec{t}=(t,0)\quad\text{and}\quad\vec{c'}= \left(\cos\alpha, \sin\alpha\right),\,\,\alpha\in[-\alpha_c,\alpha_c],
\end{equation}
whilst in the second case, we may write
\begin{equation}\label{Eq:Interior}
	\vec{t}=(t,0)\quad\text{and}\quad\vec{c'}= \left( t - \lambda t\cos\theta,  \lambda t \sin\theta \right)
	,\,\,\theta\in[-\theta_c,\theta_c],
\end{equation}
where $\alpha_c=\text{acos}\left(\frac{1+t^2-\lambda^2t^2}{2t}\right)$ and $\theta_c=\text{acos}\left(\frac{t^2+\lambda^2t^2-1}{2\lambda t^2}\right)$.
Minimizing the fidelity for these two subcases, one finds that
\begin{equation}\label{Eq:BoundFid}
	\min_{\vec{c'}\in\B'} \F(\vec{t},\vec{c'})\ge 1-\frac{\lambda}{2}.
\end{equation}
Likewise, in the scenario where the ball $\B(\vec{t},t\lambda)$ in entirely contained within the Bloch sphere, i.e., when $\B(\vec{t},t\lambda)=\B'$, one can apply a parametrization similar to Eq.~\eqref{Eq:Interior} to show that the minimal fidelity also satisfies Eq.~\eqref{Eq:BoundFid}.
All in all, we thus see that the minimal fidelity when $\tau$ is a mixed state is always greater than or equal to worst-case fidelity for single-qubit pure state, i.e., $f(\veps)=1-\lambda / 2$. Thus Eq.~\eqref{Eq:Tomography:FidelityLoss} is a valid bound on the minimal fidelity for arbitrary single-qubit state.

\subsection{The MLE reconstruction of a product state remains product}
\label{App:Proof}

It can be shown that the product nature of a multipartite product quantum state $\tau=\bigotimes_{j=1}^n \tau^{(j)}$  is {\em preserved} during the MLE reconstruction procedure~\cite{MLE} even if some local errors --- such as the systematic errors envisaged in the main text --- incurred in the description of the positive-operator-valued-measure (POVM) elements. 

Specifically, let us denote by $\tau$ the actual state that undergoes the state tomography experiment, and let  $P(a_1, a_2|k_1, k_2)$ be the conditional probability of obtaining measurement outcomes $a_1, a_2$ for the choice of measurement settings $k_1, k_2$, and let  $\Pi^{(1)}_{a_1, k_1}, \Pi^{(2)}_{a_2, k_2}$ be the {\em actual} local POVM element used to generate these measurement statistics, i.e.:
\begin{equation}
\label{Eq:App:Product:P}
P(a_1, a_2|k_1, k_2)=\tr\,\left[\tau\, \left ( \Pi^{(1)}_{a_1, k_1} \otimes \Pi^{(2)}_{a_2, k_2} \right ) \right].
\end{equation}

We prove below the claimed proposition for the bipartite scenario. Its generalization to the $n$-partite scenario is evident.
\begin{proposition}\label{Thm:Product}
For measurement statistics gathered by performing local measurements $ \Pi^{(1)}_{a_1, k_1} \otimes \Pi^{(2)}_{a_2, k_2}$ on a bipartite product quantum state $\tau=\tau^{(1)} \otimes \tau^{(2)}$, any algorithm maximizing the likelihood function given in Ref.~\cite{MLE} reconstructs a product multipartite state, even if the algorithm employs a different set of POVM, say, 
$\left\{\tilde{\Pi}^{(j)}_{a_j, k_j}\right\}\neq \left\{\Pi^{(j)}_{a_j, k_j}\right\}$ for some $j$ and $k_j$.
\end{proposition}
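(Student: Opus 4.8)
The plan is to combine two structural facts. First, because the source state $\tau=\tau^{(1)}\otimes\tau^{(2)}$ is product and the \emph{actual} measurements are product POVMs, the observed data factorizes across the two parties; second, the log-likelihood is concave on the (convex) set of density matrices, so that any stationary point satisfying the positive-semidefinite Karush--Kuhn--Tucker condition is automatically a \emph{global} maximizer. The strategy is then to build an explicit product candidate out of the two single-qubit MLE reconstructions and to verify that it meets the global optimality condition for the full bipartite problem.

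First I would fix notation. Since $\tau=\tau^{(1)}\otimes\tau^{(2)}$ and the actual POVMs are product, Eq.~\eqref{Eq:App:Product:P} gives $P(a_1,a_2|k_1,k_2)=P_1(a_1|k_1)\,P_2(a_2|k_2)$ with $P_j(a_j|k_j)=\tr[\tau^{(j)}\Pi^{(j)}_{a_j,k_j}]$; note that this makes each marginal well-defined, independent of the other party's setting. The log-likelihood to be maximized is $\ell(\rho)=\sum P(a_1,a_2|k_1,k_2)\log\tr[\rho\,(\tilde\Pi^{(1)}_{a_1,k_1}\otimes\tilde\Pi^{(2)}_{a_2,k_2})]$, evaluated with the \emph{presumed} (possibly erroneous) POVMs $\tilde\Pi^{(j)}$. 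Each term $\log\tr[\rho\,\tilde\Pi]$ is concave in $\rho$, hence so is $\ell$, which is the property that will upgrade a stationary point to a global optimum.

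Next I would define $\rho_j^\star$ to be the single-qubit MLE for party $j$, i.e.\ the maximizer of the marginal likelihood $\ell_j(\sigma)=\sum_{a_j,k_j}P_j(a_j|k_j)\log\tr[\sigma\,\tilde\Pi^{(j)}_{a_j,k_j}]$. Each obeys the Hradil extremal condition~\cite{Hradil:1997}, $R_j\rho_j^\star=\mu_j\rho_j^\star$ and $0\preceq R_j\preceq\mu_j\one$, where $R_j=\sum_{a_j,k_j}\frac{P_j(a_j|k_j)}{\tr[\rho_j^\star\tilde\Pi^{(j)}_{a_j,k_j}]}\tilde\Pi^{(j)}_{a_j,k_j}$ and $\mu_j=\tr(R_j\rho_j^\star)$. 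The candidate is $\rho^\star=\rho_1^\star\otimes\rho_2^\star$. The crux is to show $\rho^\star$ beats \emph{every} bipartite state, not just product ones. Because both the data and the predicted probabilities factorize at $\rho^\star$, the gradient operator $R(\rho^\star)$ factorizes as $R_1\otimes R_2$. Concavity then gives the supporting-hyperplane bound $\ell(\rho)\le\ell(\rho^\star)+\tr[(R_1\otimes R_2)(\rho-\rho^\star)]$ for all $\rho$; since $R_j\rho_j^\star=\mu_j\rho_j^\star$ yields $\tr[(R_1\otimes R_2)\rho^\star]=\mu_1\mu_2$ while $\tr[(R_1\otimes R_2)\rho]\le\mu_1\mu_2$ for any state, one concludes $\ell(\rho)\le\ell(\rho^\star)$, so $\rho^\star$ is the global maximizer. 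The $n$-partite case is identical with a tensor product of $n$ factors.

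The main obstacle is the final inequality $\tr[(R_1\otimes R_2)\rho]\le\mu_1\mu_2$, i.e.\ establishing $R_1\otimes R_2\preceq\mu_1\mu_2\one$. This is \emph{not} a consequence of $R_j\preceq\mu_j\one$ alone: it relies essentially on the positivity $R_j\succeq0$ (each $R_j$ is a nonnegative combination of POVM elements), which confines the eigenvalues of $R_j$ to $[0,\mu_j]$, so that the eigenvalues of the tensor product---being pairwise products---lie in $[0,\mu_1\mu_2]$. I would therefore make sure to record the positivity of the $R_j$ before invoking the tensor bound, as this is the single step where the argument would silently break if the operators were allowed indefinite spectra.
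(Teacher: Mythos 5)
Your argument is correct, but it certifies optimality by a genuinely different mechanism than the paper. The paper's proof plugs the product ansatz $\check{\rho}=\check{\rho}^{(1)}\otimes\check{\rho}^{(2)}$ into the extremal fixed-point equation $\rho=R\,\rho$ of Eq.~\eqref{Eq:App:Product:Rrho}, shows that $R$ factorizes as $R^{(1)}\otimes R^{(2)}$ under the same data factorization $P(a_1,a_2|k_1,k_2)=P(a_1|k_1)P(a_2|k_2)$ that you use, and then invokes the uniqueness of the solution of that equation (citing Ref.~\cite{Rehacek:2007}) to conclude. You instead build the candidate from the two single-party MLEs and verify the \emph{sufficient} global-optimality condition directly: concavity of $\ell$, the supporting-hyperplane bound with gradient $R_1\otimes R_2$, and the operator inequality $0\preceq R_1\otimes R_2\preceq\mu_1\mu_2\one$. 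This buys a self-contained certificate that the product state is a global maximizer of the full bipartite likelihood (the paper's verification of the fixed-point equation is only a stationarity check, with global optimality outsourced to the cited theory), and your observation that the tensor bound genuinely needs $R_j\succeq 0$ --- not just $R_j\preceq\mu_j\one$ --- is exactly the right place to be careful. The one residual point: the proposition asserts that \emph{any} maximizing algorithm returns a product state, which requires that no non-product global maximizer exists. Your hyperplane argument gives $\ell(\rho)\le\ell(\rho^\star)$ but not, as written, strictness; you should close this either by noting that equality forces $\tr[(R_1\otimes R_2)\rho]=\mu_1\mu_2$ together with tightness of the concavity bound, which for an informationally complete product POVM (strictly concave $\ell$) forces $\rho=\rho^\star$, or by appealing to the same uniqueness result~\cite{Rehacek:2007} the paper uses. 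With that sentence added, your proof is complete and, if anything, more explicit than the paper's.
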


\begin{proof}
First, let us recall that the quantum state $\rho$ maximizing the likelihood function given in Ref.~\cite{MLE} satisfies the following equation:
\begin{equation}
\label{Eq:App:Product:Rrho}
\rho = R \, \rho,
\end{equation}
where the operator $R$ encodes the maximization problem. Our proof is valid independently of the particular technique used to solve the maximum likelihood problem encoded in Eq.~\eqref{Eq:App:Product:Rrho}, for example, the iterative~\cite{MLE} and diluted algorithms~\cite{Rehacek:2007}. The operator $R$ is derived in Ref.~\cite{MLE} as:

\begin{equation}
\label{Eq:App:Product:R}
R = \sum_{a_1 a_2 k_1 k_2}
\frac{P(a_1,a_2|k_1,k_2)}
{\tr [ \rho\, (\tilde{\Pi}^{(1)}_{a_1,k_1} \otimes \tilde{\Pi}^{(2)}_{a_2,k_2}) ]}\, \tilde{\Pi}^{(1)}_{a_1,k_1} \otimes \tilde{\Pi}^{(2)}_{a_2,k_2}.
\end{equation}

We now introduce the ansatz $\check{\rho} = \check{\rho}^{(1)} \otimes \check{\rho}^{(2)}$, and substitute it into Eq.~\eqref{Eq:App:Product:Rrho}. First, the product structure $P(a_1,a_2|k_1,k_2)=P(a_1|k_1)P(a_2|k_2)$ follows by replacing $\tau = \tau^{(1)} \otimes \tau^{(2)}$ in Eq.~\eqref{Eq:App:Product:P}. Then, we exhibit the product structure of $R$ by introducing $\check{\rho}$ into Eq.~\eqref{Eq:App:Product:R}, giving $R=R^{(1)} \otimes R^{(2)}$ with:
\begin{equation}
\label{Eq:App:Product:R12}
R^{(j)} = \sum_{a_j k_j} \frac{P(a_j|k_j)}
{\tr [ \rho^{(j)} \tilde{\Pi}^{(j)}_{a_j,k_j} ]} \tilde{\Pi}^{(j)}_{a_j,k_j}.
\end{equation}

We may now rewrite Eq.~\eqref{Eq:App:Product:Rrho} as:
\begin{equation}
\check{\rho}^{(1)} \otimes \check{\rho}^{(2)} = \left ( R^{(1)} \otimes R^{(2)} \right ) \left (  \check{\rho}^{(1)} \otimes \check{\rho}^{(2)} \right ),
\end{equation}
and see immediately that original equation can be decomposed as analogous equations for the individual subsystems. Since the solution to Eq.~\eqref{Eq:App:Product:R} is unique~\cite{Rehacek:2007}, we thus see that the resulting reconstructed state must be  $\rho = \check{\rho} = \check{\rho}^{(1)} \otimes \check{\rho}^{(2)}$, where $\check{\rho}^{(j)}$ is the solution of the the single-qubit MLE equation $\check{\rho}^{(j)}=R^{(j)}  \check{\rho}^{(j)}$.
\end{proof}
 
 \subsection{Tomography of two-qubit pure states}
 \label{App:TwoQubit}
 
\subsubsection{Paramterization for low entanglement\\ (concurrence $\sin 2\alpha \lesssim 0.56$)}
\label{Sec:Tomography:TwoQubits:SmallConcurrence}

For $\alpha$ in this domain ($0\le \alpha \lesssim 0.29$), our numerical results can be very well approximated\footnote{This gives a $0.1\%$ relative error in terms of the loss of fidelity.}  by considering:
\begin{equation}\label{Eq:2qubit}
\left | \psi_\alpha \right > = e^{i\frac{2\pi}{3}} \cos \alpha \left | \psi_s^+ \right >
\left | \psi_s^+ \right > + \sin \alpha \left | \psi_s^- \right >
\left | \psi_s^- \right >,
\end{equation}
in conjunction with the {\em actual} measurement bases given in Eq.~\eqref{Eq:StandardBasis} for both qubits, 
where $\ket{\psi_s^\pm}$ were defined in Eq.~\eqref{Eq:psi_s_pm}. Note that this parametrization, in particular, recovers the optimal solution found for $\alpha=0$.

 \subsubsection{Robustness against misalignment error for low entanglement (concurrence $\sin 2\alpha \lesssim 0.56$)}
\label{App:RobSysErrLowEnt}

Here, we provide some intuition on the observation that pure two-qubit states become increasingly robust against the kind of systematic error that we consider as entanglement (parameterized by $\sin\,2\alpha$) increases within the aforementioned domain (cf. Figure~\ref{fig_twoqubit_tomography}). To this end, let us first rewrite $\ket{\psi_\alpha}$ as a density matrix, i.e., $\tau=\proj{\psi_\alpha}$ and remind that it can be decomposed in the basis of Pauli matrices:
\begin{equation}\label{Eq:HSDecomposition}
\tau = \frac{1}{4}\Big(\one\otimes\one + \vec{t}_1 \cdot
    \vec{\sigma} \otimes \mathbbm{1} + \mathbbm{1} \otimes \vec{t}_2
    \cdot \vec{\sigma} + \sum_{i j}
    T_{i j} \sigma_i \otimes \sigma_j\Big),
\end{equation}
where $\vec{t}_{1,2}$ are the Bloch vectors of the reduced density matrices, and $T$ is a $3\times3$ matrix that is responsible for the correlations between the two qubits. 

Note that for small $\alpha$, $\tau$ is weakly entangled and is close to a product state in the following
sense: from Eq.~\eqref{Eq:2qubit} and Eq.~\eqref{Eq:psi_s_pm}, if we decompose $T_{i j}$ as $T = \vec{t}_1 \vec{t}_2^T +
\tilde{T}$, we see that $\vec{t}_j = \cos 2 \alpha\,\,\,
\hat{s}$. Then the product term $\vec{t}_1 \, \vec{t}_2^T$ has spectral norm $\cos^2 2 \alpha$, whereas the spectral norm of $\tilde{T}$ is $|| \tilde{T} ||_2 = \sin 2 \alpha$. 
Rewriting $\tau$ as:
\begin{equation}
\tau=
\frac{
\left ( \one + \vec{t}_1 \cdot \vec{\sigma} \right )
}{2}
\otimes
\frac{
\left ( \one + \vec{t}_2 \cdot \vec{\sigma} \right )
}{2}
+
\frac{
\sum_{i j}
    \tilde{T}_{i j} \sigma_i \otimes \sigma_j
}{4},
\label{Eq:Tomography:Decomposition}
\end{equation}
it becomes clear that the contribution of $\tau$ when computing the
fidelity is mainly due to the first term in the sum. If we now
approximate $\tau$ by keeping only the first (product) term in the
sum, we approximate the state reconstruction by solving a set of
linear equations analogous to that given in
Eq.~\eqref{Eq:Tomography:Reconstruction}. The reconstructed state $\rho$ is also a product, as proven in Appendix~\ref{App:Proof}, and in particular the reconstructed Bloch vector is $(\cos \varepsilon + \sqrt{2} \sin \varepsilon)\vec{t}_j$, which has a norm proportional to $\cos2\alpha$. It thus follows that the susceptibility is more pronounced for smaller $\alpha$.

We can also understand this more formally by analyzing the loss of 
fidelity for small $\veps = \tfrac{\pi}{180}$:
\begin{equation}
\label{Eq:LossOfFidelity}
L(\alpha) = 1 - \F(\rho, \tau) = 1 - \tr \left(\rho\, \tau\right)
\end{equation}
on the optimal $\rho$ and pure state $\tau$ obtained from our numerical analysis \footnote{The loss of fidelity $L(\alpha)$ is proportional to the susceptibility $\mathcal{S}(\alpha)$ defined in Eq.~\eqref{Eq:Tomography:Susceptibility} at first order for small $\veps$ : $\mathcal{S}(\alpha) \approx \tfrac{L(\alpha)}{\veps}$.}.
Specifically, we  write $\tau = \tau_s +
\tilde{\tau}$ with $\tau_s = \tr_2 \tau \otimes \tr_1
\tau$, and similarly $\rho = \rho_s +
\tilde{\rho}$ and decompose $L(\alpha)$ as:
\begin{equation}
	L(\alpha) =\tr [\rho \, (\rho - \tau)] = \tr [\rho_s \,
	(\rho_s - \tau_s) ]+ \text{ terms in $\tilde{\rho},
	  \tilde{\tau}$}.
\label{eq_product_decomposition}
\end{equation}
The quantity $L(\alpha)$, as well as the two terms in the right-hand-side of Eq.~\eqref{eq_product_decomposition} are plotted individually as a function of $\alpha$ in Figure~\ref{fig_contributions}. Clearly, from the figure, we can see that the major  ($\approx89\%$) contribution to $L(\alpha)$ comes from the marginal term $\tr [ \rho_s \, (\rho_s - \tau_s)]$. 
\begin{figure}[h]
  \includegraphics{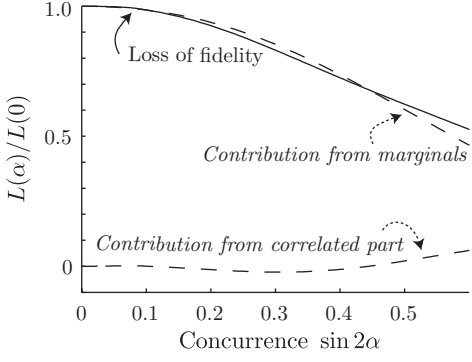}
  \caption{\label{fig_contributions}Loss of fidelity in the (numerically determined) worst-case
  scenario for low-entanglement states, for small $\veps=\tfrac{\pi}{180}$. The normalized loss of
fidelity is plotted as a solid line, contribution from the marginal terms $\tr [\rho_s \, (\rho_s - \tau_s) ]$ and the rest are
plotted as dashed lines.}
\end{figure}

\subsubsection{Parametrization for high entanglement\\ (concurrence $\sin 2\alpha \gtrsim 0.87$) }
\label{App:HighConcurrence}

 In this domain of $\alpha$, the correlation term in Eq.~\eqref{Eq:HSDecomposition} becomes the dominating term in $\tau$ and the parametrization given in Appendix~\ref{Sec:Tomography:TwoQubits:SmallConcurrence} no longer serves as a good approximation. Instead, a better parametrization\footnote{This gives a $0.2\%$ relative error in terms of the loss of fidelity. } to the optimal $\ket{\psi_\alpha}$ and $\hat{n}_k^{(j)}$ that we found in our optimization is:
 \begin{equation}
   \left | \psi_\alpha \right > = \cos \alpha \left | \psi_\theta^+ \right >
   \left | \psi_\theta^+ \right > + \sin \alpha \left | \psi_\theta^- \right >
   \left | \psi_\theta^- \right >,
 \end{equation}
 where $\left | \psi_\theta^\pm \right >$ are
 eigenvectors of  
 \begin{equation}
   \sigma_{\theta} = \frac{\sin \theta}{\sqrt{2}} \left( \sigma_x+ \sigma_z \right) + \cos \theta \, \sigma_y
 \end{equation}  
 with $\pm1$ eigenvalues and  $\theta \approx0.9961$. The phase of $\left | \psi_\theta^\pm \right >$ is such that
 $\left < 0 \middle | \psi_\theta^+ \right > = e^{i \phi } c^+$ and $\left < 0 \middle | \psi_\theta^- \right > = c^-$, where $c^\pm$ are some real numbers and $\phi \approx 0.4980$. 
 
 The state $\ket{\psi_\alpha}$ is then measured along the actual measurements axes
 \begin{equation}
   \label{Eq:App:HighConcurrence:Axes}
   \hat{n}_1^{(j)} = \left(\begin{array}{c}
       c_\varepsilon \\
       s_\veps\,c_\gamma\\
       s_\veps\,s_\gamma\\
     \end{array}\right),
   \hat{n}_2^{(j)} = \left(\begin{array}{c}
       -\tfrac{s_\varepsilon}{\sqrt{2}}\\
       c_\varepsilon \\
       -\tfrac{s_\varepsilon}{\sqrt{2}}
     \end{array}\right),
   \hat{n}_3^{(j)} = \left(\begin{array}{c}
       s_\veps\,s_\gamma\\
       s_\veps\,c_\gamma\\
       c_\varepsilon 
     \end{array}\right),
 \end{equation}
 where $s_{\varepsilon} = \sin \varepsilon$, $c_{\gamma} = \cos \gamma$, $s_{\gamma} = \sin \gamma$ and $\gamma \approx 2.7946$.

 \subsection{Tomography of two-qubit pure states with correlated misalignment error}
 \label{App:CorrelatedError}
 
\begin{figure}[h!]
  \includegraphics{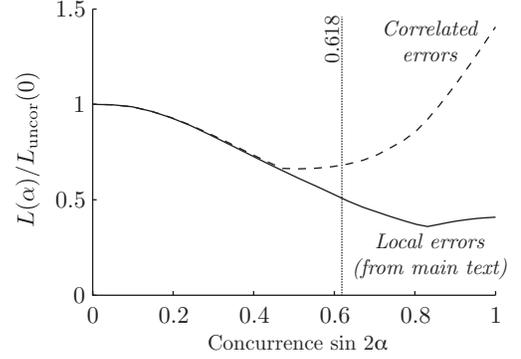}
  \caption{
    \label{Fig:App:CorrelatedError}
    Comparison of the loss of fidelity, as defined in Eq.~\eqref{Eq:LossOfFidelity} and for $\veps=\tfrac{\pi}{180}$, in the (numerically determined) worst-case scenario with correlated and local systematic errors. This graph is comparable at first order to the one in Figure~\ref{fig_twoqubit_tomography}.}
\end{figure}

The misalignment systematic errors considered in the main text are {\em{local}} in the sense that the measurement settings on the second party deviate from the ideal ones such that $\hat{m}_{k_2}^{\left( 2 \right)}$ is replaced by $\hat{n}_{k_2}^{\left( 2 \right)}$; this deviation does not depend on the measurement being done on the first party.
In contrast, let us consider now the case where the misalignment errors are {\em{correlated}} between different parties. This happens, e.g. in some ion traps where measurements on one physical system also changes the state of a neighboring system; or more commonly when pairs of settings $(k_1,k_2)$ are measured sequentially by realigning the measurement devices for each pair\cite{Tomography0}. Then the intended $k_2$-th measurement $\hat{m}^{(2)}_{k_2}$ may deviate in a different way when measuring the pair $(k_1,k_2)$ or $(k'_1,k_2)$. We thus replace the ideal measurement settings $\hat{m}_{k_j}^{\left( j \right)}$ performed on the $j$-th party by the actual settings $\hat{n}_{k_1,k_2}^{\left( j \right)}$, which now depends on $k_1$ and $k_2$.

We now compare the loss of fidelity in the local and correlated cases, fixing the maximal misalignment error at 1 degree ($\varepsilon = \tfrac{\pi}{180}$ rad) and numerically determine the worst-case fidelity in both cases. Numerically, we observe that both correlated and local systematic errors give the same fidelity drop when $\tau$ is a product state: $L \left( \alpha = 0 \right) \simeq 0.025$. Our results are shown in Figure~\ref{Fig:App:CorrelatedError}.

For small $\alpha$, we have seen in Appendix~\ref{Sec:Tomography:TwoQubits:SmallConcurrence} that the state $\tau$ has negligible correlated content $\left\| \tilde{T} \right\|_2$. In the {\em correlated} scenario, the major contribution to the loss of fidelity turns out to come also from the marginal part $\tau_s$. Moreover, the correlated errors seem to have a stronger effect on the correlated content, whose contribution to the loss of fidelity is negligible for small $\alpha$. In this regime, we thus observe the same behavior for the two scenarios.

When the correlated content of $\tau$ starts to dominate the marginal content (e.g., when $\left\|\tilde{T} \right\|_2 \gtrsim \left\| \vec{t}_1 \, \vec{t}_2^T \right\|_2$, which takes place for concurrence $\sin 2 \alpha \gtrsim 0.618$), the effect of correlated systematic errors becomes dominant.  In fact, the maximally entangled state ($\sin 2 \alpha = 1$) has the maximal loss of fidelity in this scenario.

\end{document}